\newtheorem{thm}{Theorem}[section]
\newtheorem{cor}{Corollary}
\newtheorem{rem}{Remark}
\newcommand*\diff{\mathop{}\!\mathrm{d}}
\title{\LARGE \bf
Steady-state nonlinearity of open-loop reset systems
}
\author{M. B. Kaczmarek$^{1}$, X. Zhang$^{1}$ and  S. H. HosseinNia$^{1}$
\thanks{*This work was not supported by any organization}
\thanks{$^{1}$Authors are with Faculty of Mechanical, Maritime and Materials Engineering (3mE),
        Delft University of Technology, Delft, The Netherlands
        {\tt\small m.b.kaczmarek@tudelft.nl}}%
}
\begin{document}

\maketitle
\thispagestyle{empty}
\pagestyle{empty}

\begin{abstract}

 In this paper, we introduce a new representation for open-loop reset systems. We show that at steady-state a reset integrator can be modelled as a parallel interconnection of the base-linear system and piece-wise constant nonlinearity. For sinusoidal input signals, this nonlinearity takes a form of a square wave. Subsequently, we show how the behaviour of a general open-loop reset system is related to the nonlinearity of a reset integrator. The proposed approach simplifies the analysis of reset elements in the frequency domain and provides new insights into the behaviour of reset control systems.

\end{abstract}

\section{Introduction}
\label{sec:Introduction}

A reset element is a linear time-invariant system whose states, or a subset of states, reset to values defined by a reset law if certain conditions are satisfied \cite{BanosBook}. It has been proven that reset systems can overcome limitations of linear controllers \cite{Zheng2000, Beker2001}. Examples of applications of reset in various fields like process control or networked systems can be found in textbooks \cite{BanosBook,GuoBook,asrcsBook}. Moreover, reset elements have been successfully applied to control precision positioning systems \cite{Heertjes2015, VanLoon2017, CgLp, Heertjes2016, Hazeleger2016, Saikumar2019, Chen2019}. 

One of the reasons why the reset control systems draw so much attention, especially for industrial applications, is the fact that they can be designed using a modified frequency-domain loop-shaping procedure. The steady-state behaviour of nonlinear systems may be, in certain cases, described using the \textit{describing functions} (DF) \cite{Rijlaarsdam2017}. For reset system the DF were first derived in \cite{Guo2009}, and later extended to the \textit{ Higher-order Sinusoidal-input Describing function} (HOSIDF) in \cite{Saikumar2021}.

With HOSIDF, the nonlinearity of a system is represented by describing harmonics of the output signal to a sinusoidal input at a certain frequency. While this method can be used for tuning the controllers, the influence of the higher harmonics can not be easily interpreted. 

In this paper, we introduce a new representation of the steady-state behaviour of reset systems with sinusoidal inputs. First, we show that a reset integrator can be modelled at steady state as a parallel interconnection of the base-linear systems and piece-wise constant non-linearity. For sinusoidal input signals,  this nonlinear component of the output takes a form of a square wave. 

Any general open-loop reset system can be represented as a feedback system built around a reset integrator. Using this fact, we show how the nonlinearity of any reset element in open-loop is related to the behaviour of a reset integrator.

The representation introduced in this paper provides a clear interpretation of HOSIDF of open-loop reset systems. 
\section{Background}
\label{sec:Background}
\subsection{Notation}
$I$ and $0$ denote here square identity matrix and zero matrix of appropriate size respectively. $J_{m,n}$ denotes a matrix of ones.

\subsection{Reset control systems}

Consider a reset element 
\begin{equation}\label{eq:R}
R:
\begin{cases}
\dot x_r(t) = A_r x_r(t) + B_r u_r(t), & \text{if } u_r \neq 0\\
x_r(t^+) = A_{\rho} x_r(t),  & \text{if } u_r = 0\\
y_r = C_r x_r(t) + D_r u_r(t)
\end{cases},
\end{equation}
where  $x_r(t^+) =  \lim_{\epsilon \rightarrow 0^+} x(t+\epsilon)$, $x_r \in \mathbb{R}^{m}$ is the state of $R$, $u_r \in \mathbb{R}^{1}$ is the input of $R$, $y_r \in \mathbb{R}^{1}$ is the output of $R$ and {$A_r$, $B_r$, $A_{\rho}$, $C_r$, $D_r$} are constant matrices of appropriate dimensions.

The linear system described with ($A_r,B_r,C_r,D_r$) is referred to by the term \textit{Base linear system} (BLS) and describes dynamics of $R$ in absence of reset. 

The \textit{linear reset law} $x_r(t^+) = A_{\rho}x_r(t)$ describes the change of state that occurs at \textit{reset instants} $t_k, k = 1,2,\dots$, that is when the \textit{reset condition} $u_r = 0$ is satisfied. Alternative reset laws and conditions \cite{asrcsBook} are not considered in this work.



Since reset systems are a special case of hybrid systems, pathological behaviours like beating, deadlock and Zeno behaviour may occur \cite{Haddad2014}. In practice, existence and uniqueness of the solution are assured by time-regularization \cite{Nesic2008,Zaccarian2005}. Time-regularization is a modification of reset system, such that reset instants happen only if a minimum time between resets $\Delta_m > 0$ has lapsed. Any discrete-time implementation inherently features time regularization with $\Delta_m$ equal to the sampling time \cite{Heertjes2016}. In remainder of this paper it is assumed that solutions of $R$ are well defined \cite{BanosBook}.

\subsection{Reset elements}

For illustration purpose, let introduce the \textit{First order reset element} (FORE) and the \textit{Second order reset element} (SORE). FORE is a reset element \eqref{eq:R} with
\begin{align*}
   A_r &= -\omega_r, & B_r &= \omega_r,&  C_r &= 1,&  D_r &= 0.
\end{align*}
The state-space matrices for SORE are
\begin{align*}
   A_r &= \begin{bmatrix} 0 & 1\\ -\omega_r^2 & -2\beta_r\omega_r \end{bmatrix}, & B_r &= \begin{bmatrix} 0 \\ \omega_r^2\end{bmatrix},\\  C_r &= \begin{bmatrix} 1 & 0 \end{bmatrix},&  D_r &= 0.
\end{align*}
Here, we choose $\omega_r = 100$, $\beta_r = 0.1$ and $A_\rho = 0$. 
\section{Nonlinearity of reset systems}
\label{sec:Derivations}

To analyse the steady-state behaviour of a system, we require it to be \textit{uniformly convergent} \cite{Pavlov2006a,Pavlov2007}. In this way, we guarantee the existence of a steady-state solution for a system driven by periodic inputs. This condition is satisfied by \eqref{eq:R} for a class of sinusoidal inputs if $|\lambda(A_\rho e^{A_r \delta})|<1, \forall \delta\in\mathbb{R}^+$ \cite{Guo2009}. In \cite{Dastjerdi2020b}, this property has been studied in closed-loop, related to the $H_\beta$ condition \cite{Beker2004} and proven for a wider class of input signals.
\begin{thm}[Reset integrator]\label{thm:Rint}
The steady-state state response of uniformly convergent reset integrator, i.e.\  reset system \eqref{eq:R} with $A_r = 0, B_r = I, C_r = I, D_r = 0$, to an input signal $u_r(t) = a \sin(\omega t), a \in \mathbb{R}^{m\times 1}$ is given by 
\begin{equation}
    x_r(t) = x_{bls}(t) + q_i(t),
\end{equation}
where $x_{bls}$ denotes the steady-state response of the base-linear system $(A_r,B_r,C_r,D_r)$ and $q_i$ is a square wave in phase with $u_r$,
with a mean value
\begin{equation}
\bar q_i = \frac{-a}{\omega}I
\end{equation}
and amplitude
\begin{equation}
\hat q_i = (I-A_\rho)(I + A_\rho)^{-1}\frac{a}{\omega}.
\end{equation}
\end{thm}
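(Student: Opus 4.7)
The plan is to exploit the fact that the reset integrator and its base-linear counterpart satisfy exactly the same flow equation $\dot x = u_r$ between reset instants; therefore $q_i(t) := x_r(t) - x_{bls}(t)$ has zero derivative on every flow interval and is automatically piecewise constant. Since the reset condition $u_r = 0$ fires precisely at the zero crossings $t_k = k\pi/\omega$ of the sinusoidal input, the jump pattern of $q_i$ is locked to the half-period of $u_r$. It will then suffice to show that on two consecutive half-periods $q_i$ takes exactly two distinct values; the square-wave structure in phase with $u_r$ follows immediately from that.

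I would identify those two values by computing the inter-reset increment $\int_{t_k}^{t_{k+1}} a\sin(\omega \tau)\,\diff\tau = \tfrac{2a}{\omega}(-1)^k$ and composing it with the linear reset law to obtain the discrete recursion
\begin{equation*}
x_r(t_{k+1}^+) \;=\; A_\rho\!\left( x_r(t_k^+) + \tfrac{2a}{\omega}(-1)^k \right).
\end{equation*}
Uniform convergence, which for $A_r = 0$ reduces to $|\lambda(A_\rho)|<1$ and in particular makes $I+A_\rho$ invertible, delivers a unique $2\pi/\omega$-periodic steady state, so the sequence $\{x_r(t_k^+)\}$ has period $2$; setting $x_r(t_2^+) = x_r(t_0^+)$ yields a linear matrix equation that factorises via $I - A_\rho^{2} = (I-A_\rho)(I+A_\rho)$. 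Using that $A_\rho$ commutes with every polynomial in itself, the two post-reset values come out to $x_r(t_0^+) = -2A_\rho (I+A_\rho)^{-1} \tfrac{a}{\omega}$ and $x_r(t_1^+) = +2A_\rho (I+A_\rho)^{-1} \tfrac{a}{\omega}$.

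Finally I would take for $x_{bls}$ the zero-initial-condition base-linear response $x_{bls}(t) = \tfrac{a}{\omega}(1-\cos(\omega t))$ and, on each half-period, subtract $x_{bls}$ from $x_r$ to read off the two constants of $q_i$, namely $-2A_\rho(I+A_\rho)^{-1}\tfrac{a}{\omega}$ on $(t_0,t_1)$ and $-2(I+A_\rho)^{-1}\tfrac{a}{\omega}$ on $(t_1,t_2)$. Their half-sum collapses through $(I+A_\rho)(I+A_\rho)^{-1} = I$ to $-\tfrac{a}{\omega}$, recovering $\bar q_i$, and their half-difference gives exactly $(I-A_\rho)(I+A_\rho)^{-1}\tfrac{a}{\omega}$, recovering $\hat q_i$. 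The main obstacle I anticipate is the DC ambiguity of the base-linear integrator: since the BLS is only marginally stable, its steady-state response is unique only up to a constant offset, and the theorem's mean value $-\tfrac{a}{\omega}$ is sensitive to how that offset is fixed; once the convention $x_{bls}(0)=0$ is declared, the remainder is a routine matrix computation.
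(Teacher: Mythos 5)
Your proposal is correct and follows essentially the same route as the paper: both define $q_i = x_r - x_{bls}$, observe it is piecewise constant for $A_r=0$, set up the period-two recursion at the reset instants $t_k = k\pi/\omega$, and solve the periodic fixed point via the factorisation $I-A_\rho^2=(I-A_\rho)(I+A_\rho)$ to obtain $\bar q_i$ and $\hat q_i$ (you recurse on $x_r(t_k^+)$ and subtract $x_{bls}$ afterwards, while the paper recurses on $q_i(t_k^+)$ directly, which is only a change of variables). Your explicit handling of the $x_{bls}(0)=0$ convention is a point the paper leaves implicit, and your assignment of the two constant levels to the two half-periods is the one consistent with the claim that $q_i$ is in phase with $u_r$.
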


\begin{proof}
Consider dynamics of $q \triangleq x_r-x_{bls}$
\begin{equation}\label{eq:dq}
    \begin{cases}
    \dot q(t) = A_r q(t), & \text{if } u_r \neq 0\\
    q(t^+) = A_\rho q(t)+(A_\rho-I)x_{bls}(t), & \text{if } u_r = 0.
    \end{cases}
\end{equation}
It is clear that for a reset integrator $A_r  =0$ with any input $u_r$, $q(t)$ is piecewise constant, and jumps only at the reset instants.

 Between consecutive reset instants $t_k,t_{k+1}$ we have
\begin{equation}
    q(t) = q(t_k^+)+\int_{t_k}^t A_r q(\tau)\diff \tau, \text{ for } t\in(t_k, t_{k+1}).
\end{equation}

For an input signal $u_r(t) = a \sin(\omega t)$, we have reset instants $t_{k} = k\frac{\pi}{\omega}, k\in \mathbb{N}$. Signal $q$ specific for this case is denoted with subscript $_i$. From the second equation of \eqref{eq:dq} 
\begin{equation}
\begin{split}
    q_i(t_k^+) &= 
    A_\rho q_i(t_{k-1}^+) + (A_\rho -I)\int_{t_{k-1}}^{t_{k}} B_r u_r(\tau) \diff \tau \\
    &= \begin{cases}
    A_\rho q_i(t_{2n}^+), & \text{if } k = 2n+1\\
    A_\rho q_i(t_{2n+1}^+) + (A_\rho -I)\frac{2}{\omega}a, & \text{if } k = 2n.
    \end{cases}
\end{split}    
\end{equation}
At the steady state, we have
\begin{equation}
    \begin{split}
        q_i(t_{2n}^+) &= q_i(t_{2n+2}^+)\\
        &= A_\rho q_i(t_{2n+1}^+)+(A_\rho -I)\frac{2}{\omega}a \\
        &= A_\rho^2 q_i(t_{2n}^+)+(A_\rho -I)\frac{2}{\omega}a.
    \end{split}
\end{equation}
After solving for $q_i(t_{2n}^+)$ we have
\begin{equation}
 q_i(t_{2n}^+) = (I-A_\rho^2)^{-1}(A_\rho - I)\frac{2}{\omega}a = -(I+A_\rho)^{-1}\frac{2}{\omega}a.
\end{equation}
For $q_i(t_{2n+1}^+)$ we have
\begin{equation}
    q_i(t_{2n+1}^+) = -A_\rho(I+A_\rho)^{-1}\frac{2}{\omega}a.
\end{equation}

Taking all this into consideration we have 
\begin{equation}
    q_i(t) = 
    \begin{cases}
    -(I+A_\rho)^{-1}\frac{2}{\omega}a, & \text{for } t \in [t_{2n},t_{2n+1})\\
    -A_\rho(I+A_\rho)^{-1}\frac{2}{\omega}a, & \text{for } t \in [t_{2n+1},t_{2n+2}).
    \end{cases}
\end{equation}

The mean value $\bar q_i$ and the peak amplitude $\hat q_i$ are given by
\begin{align}
    \bar q_i &= \frac{q(t_{2n}^+) + q(t_{2n+1}^+)}{2} = \frac{-a}{\omega}I\\
    \hat q_i &= \frac{q(t_{2n+1}^+) - q(t_{2n}^+)}{2} = (I-A_\rho)(I+A_\rho)^{-1}\frac{a}{\omega}.
\end{align}

So we can also write
\begin{equation}
    q_i(t) = 
    \begin{cases}
    \bar q_i + \hat q_i, & \text{for } t \in [t_{2n},t_{2n+1})\\
    \bar q_i - \hat q_i, & \text{for } t \in [t_{2n+1},t_{2n+2}).
    \end{cases}
\end{equation}
\end{proof}

\begin{rem}
Signal $q(t)$ represents nonlinearity added to the base linear system by the reset actions.
\end{rem}

Steady-state response of a reset integrator, divided into the linear and nonlinear components, is presented in Fig. \ref{fig:TimeInt}.

\begin{figure} 
    \centering
  \subfloat[Integrator ($A_r = 0, B_r = 1, C_r = 1, D_r = 0, A_\rho = 0$) \label{fig:TimeInt}]{%
       \includegraphics[width=0.45\textwidth]{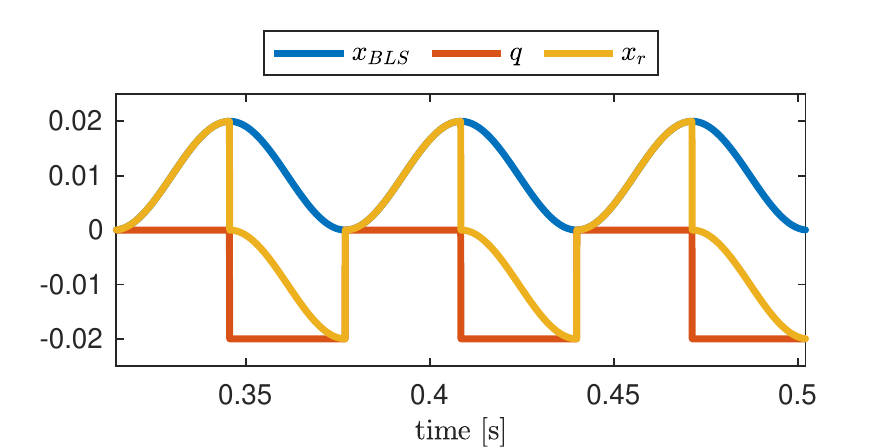}}
\\
  \subfloat[FORE ($A_r = -100, B_r = 100, C_r = 1, D_r = 0, A_\rho = 0$) \label{fig:TimeFore}]{%
        \includegraphics[width=0.45\textwidth]{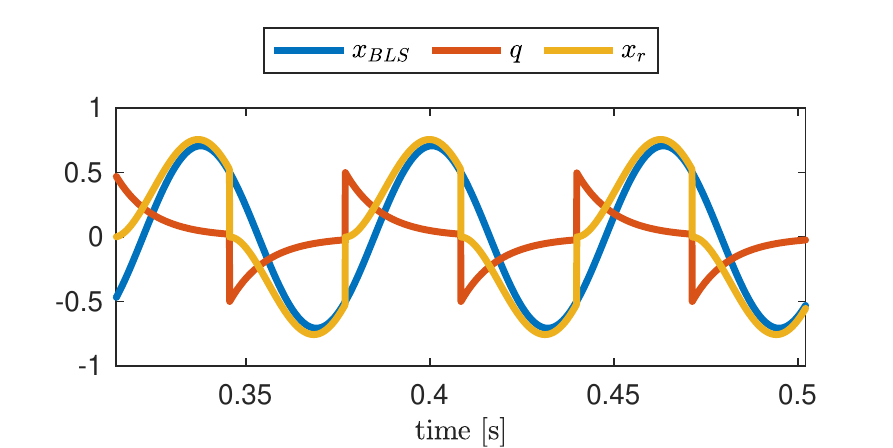}}
  \caption{Steady-state time-responses of reset elements to $u_r(t) = \sin(100 t)$ as a sum of linear and nonlinear contributions.}
  \label{fig:Time} 
\end{figure}

\begin{thm}[Reset state-space system]\label{thm:Rtot}
The steady-state state response of {uniformly convergent} reset system \eqref{eq:R} to an input signal $u_r(t) = b \sin(\omega t), b \in \mathbb{R}$ is given by 
\begin{align}
    x_{r} &= x_{bls} + q = x_{bls} + T_{q}\circledast q_i,\\
    T_{q}(s) &=\frac{x_r(s)}{q_i(s)}  = Q(sI-A_r)^{-1}s
\end{align}
where $x_{bls}$ denotes the steady-state response of the base-linear system $(A_r,B_r,C_r,D_r)$ and $q_i$ is the square wave introduced in Theorem \ref{thm:Rint}, representing the nonlinearity in a reset integrator with the same number of states and reset matrix as in \eqref{eq:R}, that is driven by input $a\sin(\omega t), a = bJ_{m,1}$.

The scaling of the magnitude of the nonlinearity $Q\in \mathbb{R}^{m \times m}$ is given by
\begin{equation}
    \min_Q \left\lVert e^{A_r t}(I-A_\rho)x_{bls}(t_k)- e^{A_r t}A_\rho Q q^*(t_k)+Qq^*(t)\right\rVert_2^2,
\end{equation}
for $t\in (t_k,t_{k+1})$, where $t_k,t_{k+1}$ denote subsequent reset instants.
\end{thm}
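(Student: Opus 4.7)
The plan is to start from the exact dynamics of $q \triangleq x_r - x_{bls}$ given in (\ref{eq:dq}) and derive the proposed approximation structure directly. First I would solve the continuous ODE $\dot q = A_r q$ between two consecutive reset instants $t_k$ and $t_{k+1}$, obtaining $q(t) = e^{A_r(t-t_k)}\,q(t_k^+)$. Substituting the jump law $q(t_k^+) = A_\rho\, q(t_k) + (A_\rho - I)\,x_{bls}(t_k)$ produces the explicit closed form
\begin{equation*}
q(t) \;=\; e^{A_r(t-t_k)} A_\rho\, q(t_k) \;-\; e^{A_r(t-t_k)} (I-A_\rho)\, x_{bls}(t_k),
\end{equation*}
valid for $t\in(t_k,t_{k+1})$. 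This already contains the three terms of the residual in the theorem statement; what remains is to justify replacing the pre-reset value $q(t_k)$ by a term of the form $Q\,q^*(t_k)$, where $q^*$ is the reset-integrator square wave from Theorem~\ref{thm:Rint} driven by the vector input $a\sin(\omega t)$ with $a = b\,J_{m,1}$.

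Next I would give a time-domain interpretation of $T_q(s) = Q(sI-A_r)^{-1}s$. The factor $s$ corresponds to distributional differentiation, and because $q^*$ is piecewise constant, $\dot q^*$ is an impulse train at the reset instants with weights $q^*(t_k^+) - q^*(t_k)$. Convolving this train with the impulse response $Q e^{A_r t}$ of $Q(sI-A_r)^{-1}$ produces a signal that jumps by $Q\,(q^*(t_k^+)-q^*(t_k))$ at each $t_k$ and evolves according to $e^{A_r\tau}$ between resets. Hence $T_q\circledast q^*$ has exactly the inter-reset continuous dynamics of $q$; only the amplitudes of the jumps need to be matched, and this is the role of the scaling matrix $Q$.

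Inserting the ansatz $q(t) = Q\,q^*(t)$ into the closed-form expression from step one yields the residual
\begin{equation*}
Q q^*(t) \;-\; e^{A_r(t-t_k)} A_\rho\, Q q^*(t_k) \;+\; e^{A_r(t-t_k)}(I-A_\rho)\, x_{bls}(t_k),
\end{equation*}
which is precisely the integrand appearing in the theorem statement. Because the true jump of $q$ couples to $q(t_k)$ and $x_{bls}(t_k)$, whereas the proposed representation can provide only the constant $Q\,q^*(t_k)$, a pointwise identity is generically impossible. Consequently, $Q$ is fixed by minimising the $L^2$ norm of this residual over one inter-reset interval $(t_k,t_{k+1})$, which recovers the optimisation in the theorem statement.

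The main obstacle I anticipate is the careful distributional interpretation of $T_q\circledast q^*$: one must argue that, even at steady state (once the transient of $(sI-A_r)^{-1}$ has decayed), the differentiation by $s$ combined with the piecewise-constant $q^*$ exactly reproduces the inter-reset evolution of $q$. In addition, the scaling $a = b\,J_{m,1}$ linking the scalar input $b\sin(\omega t)$ of the present theorem to the multi-input reset integrator of Theorem~\ref{thm:Rint} must be tracked explicitly, so that $q^*$ is the square wave of the correct magnitude. Once these two points are in place, the rest of the argument is essentially a substitution into (\ref{eq:dq}).
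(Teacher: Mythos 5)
Your argument is correct in substance and arrives at the same two key identities as the paper --- the decomposition $x_r = x_{bls} + T_q\circledast q_i$ with $T_q(s) = Q(sI-A_r)^{-1}s$, and the residual whose $2$-norm minimisation determines $Q$ --- but you reach the first of them by a genuinely different route. The paper obtains $T_q$ by a block-diagram manipulation: it draws \eqref{eq:R} as a feedback loop of gain $A_r$ around a reset integrator, replaces that integrator by its base-linear part plus the square wave of Theorem~\ref{thm:Rint}, and reads $s(sI-A_r)^{-1}$ off as the transfer from the injection point to $x_r$, with $Q$ motivated only informally by the phase shift of the signal entering the internal integrator. You instead work directly with the hybrid dynamics \eqref{eq:dq} of $q=x_r-x_{bls}$: between resets $q(t)=e^{A_r(t-t_k)}q(t_k^+)$, and $(sI-A_r)^{-1}s$ acting on a piecewise-constant signal is an impulse train at the reset instants convolved with $e^{A_r t}$, so it reproduces the inter-reset flow and confines the jumps to the $t_k$, leaving only the jump amplitudes to be absorbed into $Q$. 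This is more elementary, and it makes explicit what the paper's block-diagram argument leaves implicit: that the reset instants of \eqref{eq:R} and of the reference integrator coincide (both being zero crossings of $u_r$) is precisely what lets $q_i$ serve as the driving signal. Your derivation of the residual and of the least-squares characterisation of $Q$ is then essentially identical to the paper's.

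Two caveats. First, you use the symbol $q^*$ both for the square wave of Theorem~\ref{thm:Rint} (the paper's $q_i$) and for the output of $(sI-A_r)^{-1}s$ driven by that square wave (the paper's $q^*$); only the latter can appear in the ansatz $q=Qq^*$, since $q$ is not piecewise constant for $A_r\neq 0$, so this must be disentangled before the residual computation is stated. Second, the paper's proof continues past the minimisation to compute $x_{bls}(t_k)$, $q^*(t_{2n+1})$ and $q^*(t_{2n+1}^+)$ explicitly (and gives a closed-form $Q$ for $m=1$); your proposal omits this, which is acceptable for the theorem as stated but leaves $Q$ characterised only implicitly.
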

\begin{proof}
Any state-space reset system can be represented in a block diagram form as a feedback system with a reset integrator and feedback gain $A_r$, see Fig. \ref{fig:Rss1}. Using the Theorem \ref{thm:Rint}, the reset integrator can be represented as a parallel interconnection of its base-linear system and a block generating the signal related to resets (Fig. \ref{fig:Rss2}). 

Reset instants of considered elements are depending only on the input signal. In consequence, reset instants for a reset integrator considered in Theorem \ref{thm:Rint} and a general reset state-space system are the same, if both systems are driven by sine waves of the same frequency and phase. However, in the state-space system, the signal entering the integrator $e_r$ is shifted with respect to the $u_r$. This leads to the change of the magnitude of the square wave, which is represented by the scaling matrix $Q$.

\begin{figure} 
    \centering
  \subfloat[\label{fig:Rss1}]{%
       \includegraphics[width=0.43\textwidth]{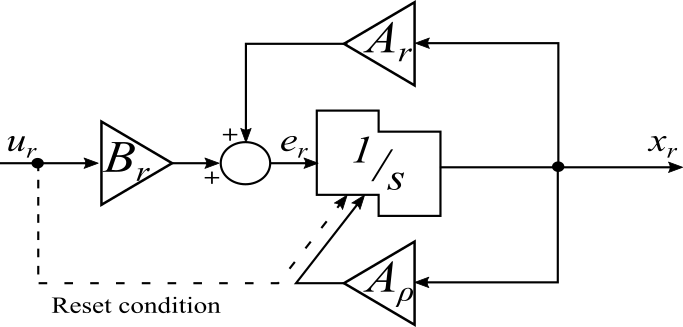}}
\hfill
  \subfloat[\label{fig:Rss2}]{%
        \includegraphics[width=0.43\textwidth]{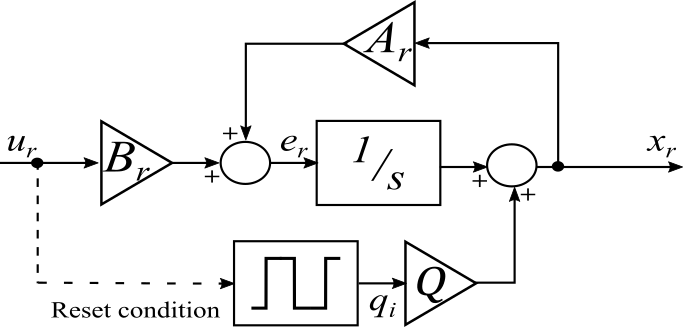}}
\hfill
  \subfloat[\label{fig:Rss3}]{%
        \includegraphics[width=0.43\textwidth]{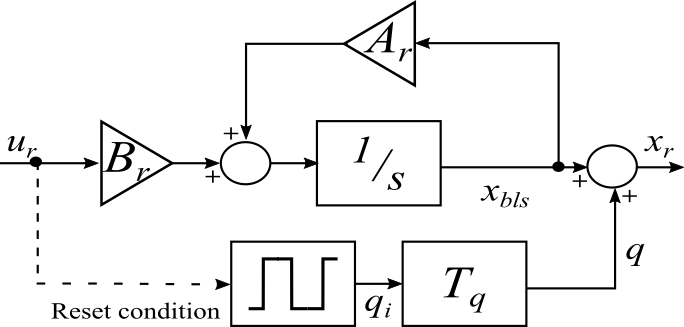}}
   \caption{Block diagram representation of a reset state-space system. a) standard form, b) reset integrator as a sum of linear and nonlinear components, c) entire state-space element as a sum of linear and nonlinear components. Dotted lines indicate signals triggering resets.}
  \label{fig:Rss} 
\end{figure}

The complete new representation of a reset element is shown in Fig. \ref{fig:Rss3}. It can be seen that the state of a reset element consists of the response of the base linear system $x_{bls}$ and a contribution due to the nonlinearity $q_i$. This contribution is denoted by $q$ and defined in \eqref{eq:dq}. By analysing the block diagrams, we have
\begin{align}
    x_r &= A_r\frac{I}{s}x_r + B_r\frac{I}{s} u_r + q,\\
    x_{r} &= x_{bls} + q = x_{bls} + T_{q}\circledast q_i,\\
    T_{q}(s) &=\frac{x_r(s)}{q_i(s)}  = Q(sI-A_r)^{-1}s
\end{align}

The linear transfer function $T_{q}$ defines the magnitude and shape of the nonlinear contribution. 

To find the amplitude of the nonlinear contribution we consider evolution of a reset system states between reset instants. We define $q(t) \triangleq Q q^*(t)$, where $q^*$ can be interpreted as a response of $T_q$ with $Q = I$ to the square wave signal $q_i$. 

Using that $x_r = x_{bls}+q$, for $t\in (t_k,t_{k+1})$ we have
\begin{align}
 e^{A_r(t-t_k)}x_{bls}(t_k) + q(t) = &e^{A_r(t-t_k)}x_{r}(t_k^+)\\
 e^{A_r(t-t_k)}x_{bls}(t_k) + Qq^*(t) = &e^{A_r(t-t_k)}A_\rho (x_{bls}(t_k)+Qq^*(t_k))\\
0 = e^{A_r t}(I-A_\rho)x_{bls}(t_k)- &e^{A_r t}A_\rho  Q q^*(t_k)+Qq^*(t).
\end{align}
{In consequence, $Q$ can be found as a solution for a minimization problem}
\begin{equation}
    \min_Q \left\lVert e^{A_r t}(I-A_\rho)x_{bls}(t_k)- e^{A_r t}A_\rho Q q^*(t_k)+Qq^*(t)\right\rVert_2^2.
\end{equation}
For first-order reset elements ($m=1$), it is sufficient to consider only the before and after-reset state. From \eqref{eq:dq} we have
\begin{align}
A_\rho \left(x_{bls}(t_k)+q(t_k)\right) &= x_{bls}(t_k) + q(t_k^+),\\
(A_\rho-I)x_{bls}(t_k) &= Q\left(q^*(t_k^+) - A_\rho q^*(t_k)\right),\\
Q = (A_\rho-I) x_{bls}&(t_k)\big(q^*(t_k^+) - A_\rho q^*(t_k)\big) ^{-1}.
\end{align}

To complete the calculation of $Q$ we need to find $x_{bls}(t_k)$ and $q^*(t)$.

The response of the base-linear system at a reset instant is given by
\begin{equation}
\begin{split}
    x_{bls}(t_k) &= e^{A_r t}x_{bls}(0) + \int_0^t e^{A_r(t-\tau)}B_r a \sin(\omega \tau)\diff \tau \\ &= \begin{cases} 
    ~~(A_r^2+\omega^2I)^{-1}B_r a\omega, \text{ for } k = 2n+1\\
    -(A_r^2+\omega^2I)^{-1}B_r a\omega, \text{ for } k = 2n+2.
    \end{cases}
\end{split}
\end{equation}
As we mentioned earlier, $q^*$ can be interpreted as a response of $T_q$ with $Q = I$ to $q_i$. $T_q$ can be represented in the state-space form by the quadruple (ABCD-matrices) $(A_r,A_r,I,I)$. Note, that $T_q$ has a direct feedthrough term. We calculate response of a linear system $T_q$ to a square wave $q_i$
\begin{equation}\label{eq:q*t}
    q^*(t) = e^{A_r t}q^*(0) + \int_0^t e^{A_r(t-\tau)}A_r q_i(\tau)\diff \tau + q_i(t).
\end{equation}
At the steady state, the term related to $q^*(0)$ disappears. $q_i$ consists of a constant and a variable component. The contribution due to the constant component is
\begin{equation}
    \int_0^t e^{A_r(t-\tau)}A_r \bar q_i\diff \tau = (e^{A_r t}-I)\bar q_i \approx -\bar q_i \text{ (for large $t$)}.
\end{equation}

Consider now $t \in [t_{2n}, t_{2n+1})$. The varying part $q_i(t) - \bar q_i$ is a zero-mean square wave with peak amplitude $\hat q_i$. We have
\begin{equation}
    \int_0^{t_{2n}} e^{A_r(t-\tau)}A_r (q_i(\tau) -\bar q_i)\diff \tau = 0
\end{equation}
The response for $t \in [t_{2n}, t_{2n+1})$ can be calculated as a response to step with magnitude $\hat q_i$, because the constant components $\bar q_i$ cancel out. For this, we define the state of $T_q$ to be $x_q$, such that $q(t)^* = Ix_q(t) + Iq_i(t)$. The initial condition is  $x_q(t_{2n})$. At the steady state we have $x_q(t_{2n}) = -x_q(t_{2n+1})$. By comparing these values we get
\begin{align}\label{eq:qtk}
    -e^{A_r \frac{\pi}{\omega}}x_q(t_{2n+1}) + (e^{A_r \frac{\pi}{\omega}}-I)\hat q = x_q(t_{2n+1})\\
    (e^{A_r \frac{\pi}{\omega}}+I)x_q(t_{2n+1}) = (e^{A_r \frac{\pi}{\omega}}-I)\hat q_i\\
    x_q(t_{2n+1}) = (e^{A_r \frac{\pi}{\omega}}+I)^{-1}(e^{A_r \frac{\pi}{\omega}}-I)\hat q_i\\
    q^*(t_{2n+1}) = \big((e^{A_r \frac{\pi}{\omega}}+I)^{-1}(e^{A_r \frac{\pi}{\omega}}-I) +I \big)\hat q_i
\end{align}
Following the step response logic, the after reset value $q^*(t_{2n+1}^+)$ is given by
\begin{align}\label{eq:qtk+}
    q^*(t_{2n+1}^+) = x_q(t_{2n+1})-I\hat q_i.
\end{align}

\end{proof}

To illustrate the Theorem \ref{thm:Rtot}, Fig. \ref{fig:TimeFore} shows a steady-state response of a FORE. The FORE can be represented as a feedback system around the reset integrator, whose response is presented in Fig. \ref{fig:TimeInt}. In consequence, the nonlinear components $q$ presented in both figures are related by $T_q$.

The following corollaries are a consequence of the fact that an ideal square wave can be represented as an infinite sum of sinusoidal waves
\begin{equation}
    q_i(t) = \bar q_i + \hat q_i \frac{4}{\pi}\left(\sin(\omega t) + \frac{1}{3}\sin(3\omega t)+\frac{1}{5}\sin(5\omega t)\dots \right).
\end{equation}

\begin{cor}\label{cor:qharmonics}
The $k$-th harmonic of the nonlinear contribution $q$ in the reset system \eqref{eq:R} is given  by 
\begin{equation}
    q_k(\omega) = 
    \begin{cases}
     \frac{4}{k\pi}C_rQ(jk\omega I - A_r)^{-1}jk\omega \hat q_i J_{m,1} & \text{for odd } k\\
     0 &\text{for even } k.
    \end{cases}
\end{equation}
\end{cor}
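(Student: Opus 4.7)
The plan is to combine the Fourier series representation of the square wave $q_i$ introduced in Theorem \ref{thm:Rint} with the transfer-function characterization of the map $q_i \mapsto q$ given in Theorem \ref{thm:Rtot}, and then apply the output matrix $C_r$ to obtain the harmonic decomposition of the reset-induced contribution to the output signal. Since $T_q(s) = Q(sI - A_r)^{-1}s$ is a linear time-invariant operator, superposition lets me process each Fourier component of $q_i$ independently.

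First I would write out the square wave in the Fourier series form already displayed just above the corollary, explicitly separating the DC component $\bar q_i$ from the odd harmonic sinusoids with amplitudes $\frac{4}{k\pi}\hat q_i J_{m,1}$ for odd $k$, and noting that all even-harmonic coefficients vanish by the standard square-wave expansion. Second, I would evaluate $T_q$ at each harmonic frequency: for the DC term I would observe that $T_q(0) = Q(-A_r)^{-1}\cdot 0 = 0$, so the constant mean value of $q_i$ produces no steady-state contribution through $T_q$ and hence no $0$-th harmonic; for each odd harmonic $k$, substitution gives the factor $Q(jk\omega I - A_r)^{-1}jk\omega$, which combines with the Fourier coefficient $\frac{4}{k\pi}\hat q_i J_{m,1}$ to yield the stated expression; for even $k$, the Fourier coefficient itself is zero, which immediately gives the second case of the corollary.

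Third, to pass from the state-level contribution $q$ to the output-level contribution which explicitly appears in the corollary formula (because of the $C_r$ factor), I would left-multiply the result by $C_r$, using that the output of the reset element decomposes as $y_r = C_r x_{bls} + D_r u_r + C_r q$ so that the reset-induced part of the output is $C_r q$. Since the argument is entirely linear from $q_i$ onwards, the harmonic structure inherited from the square wave is preserved exactly.

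The main thing to get right is the treatment of the constant part $\bar q_i$: one must argue carefully that, although $\bar q_i \ne 0$ in general, its steady-state response through $T_q$ vanishes because of the factor $s$ in the numerator of $T_q(s)$, i.e.\ the transfer function has a transmission zero at the origin. After that observation, the remainder reduces to bookkeeping of Fourier coefficients of the square wave, which is routine. No subtle convergence or interchange-of-limits issue arises because the sum over odd harmonics converges in the same sense in which the Fourier series of $q_i$ converges, and $T_q$ acts continuously on that series.
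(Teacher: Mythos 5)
Your proposal is correct and follows essentially the same route the paper intends: it expands the square wave $q_i$ into its Fourier series, pushes each sinusoidal component through the LTI operator $T_q(s)=Q(sI-A_r)^{-1}s$ by superposition, applies $C_r$ to reach the output level, and notes that the even harmonics and the DC term (the latter because of the zero of $T_q$ at $s=0$) contribute nothing. The paper only sketches this by citing the square-wave expansion, so your explicit treatment of $\bar q_i$ is a welcome, but not divergent, elaboration.
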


\begin{cor}\label{cor:HOSIDF}
The Higher-order Sinusoidal-input Describing function (HOSIDF) of the reset system \eqref{eq:R} is given  by 
\begin{equation}
    H_k(\omega) = 
    \begin{cases}
    C_r\left(j\omega I-A_r\right)^{-1}B_r+D_r + q_1(\omega) & \text{for } k=1\\
     q_k(\omega) & \text{for } k\geq 2
    \end{cases}
\end{equation}
\end{cor}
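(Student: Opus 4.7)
The plan is to apply the output equation of \eqref{eq:R} to the state decomposition obtained in Theorem \ref{thm:Rtot} and then harvest the harmonic content using Corollary \ref{cor:qharmonics}. First, I would write
\begin{equation}
    y_r(t) = C_r x_r(t) + D_r u_r(t) = \bigl(C_r x_{bls}(t) + D_r u_r(t)\bigr) + C_r q(t),
\end{equation}
so that the steady-state output splits cleanly into a base-linear component and a reset-induced component. Since the first bracket is the response of a linear time-invariant system to a pure sinusoid at frequency $\omega$, its steady-state content lives entirely at the fundamental frequency and its complex amplitude equals $C_r(j\omega I - A_r)^{-1}B_r + D_r$; it contributes nothing to any harmonic with $k\geq 2$.

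Next I would extract the harmonics of $C_r q(t)$. Theorem \ref{thm:Rtot} gives $q = T_q \circledast q_i$ with $T_q(s) = Q(sI - A_r)^{-1}s$, and Corollary \ref{cor:qharmonics} already computes $C_r q$ at frequency $k\omega$ by expanding $q_i$ as a Fourier series of odd harmonics of a square wave and passing each sinusoid through the linear filter $C_r T_q(jk\omega)$. The resulting $q_k(\omega)$ vanishes for even $k$ and equals the expression stated in the corollary for odd $k$.

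Combining the two contributions harmonic by harmonic and invoking the standard definition of the HOSIDF as the complex amplitude of the $k$-th harmonic of the steady-state output divided by the amplitude of the sinusoidal input then yields the claimed piecewise formula: at $k=1$ the base-linear frequency response adds to $q_1(\omega)$, while for $k\geq 2$ only $q_k(\omega)$ survives. The only subtlety worth checking carefully is the phase alignment between the BLS fundamental and $q_1(\omega)$ (so that they combine as complex numbers rather than as magnitudes); this is guaranteed because $q_i$ is in phase with $u_r$ by Theorem \ref{thm:Rint}, so the Fourier expansion of $q_i$ uses the same reference sinusoid as $u_r$, and both contributions are referred to the same complex phasor.
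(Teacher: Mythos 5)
Your proposal is correct and follows essentially the same route as the paper, which justifies this corollary (together with Corollary \ref{cor:qharmonics}) directly from the Fourier expansion of the square wave $q_i$ passed through the linear filter $C_r T_q$, added to the base-linear frequency response at the fundamental. Your explicit check of phase alignment between the BLS fundamental and $q_1(\omega)$ is a sensible refinement of a point the paper leaves implicit.
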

Corollaries \ref{cor:qharmonics} and \ref{cor:HOSIDF} give a clear insight into the behaviour of HOSIDF for reset elements.

\begin{rem}
For a reset integrator, the influence of a higher-order harmonic on the behaviour of the element is inversely proportional to the order of the harmonic.
\end{rem}

Nonlinearities of a general reset system and a corresponding reset integrator are related by a linear transfer function $T_q$, which depends on the state matrix $A_r$ of the base linear system of the system and the reset matrix $A_\rho$. In consequence, the influence of a particular harmonic can be amplified or diminished.

\begin{rem}
For a uniformly convergent reset element, the sum of HOSIDF is always finite, as they compose the nonlinearity $q$.
\end{rem}

\section{Examples}
\label{sec:Examples}




\begin{figure}[b]
  \centering
  \includegraphics[width=0.45\textwidth]{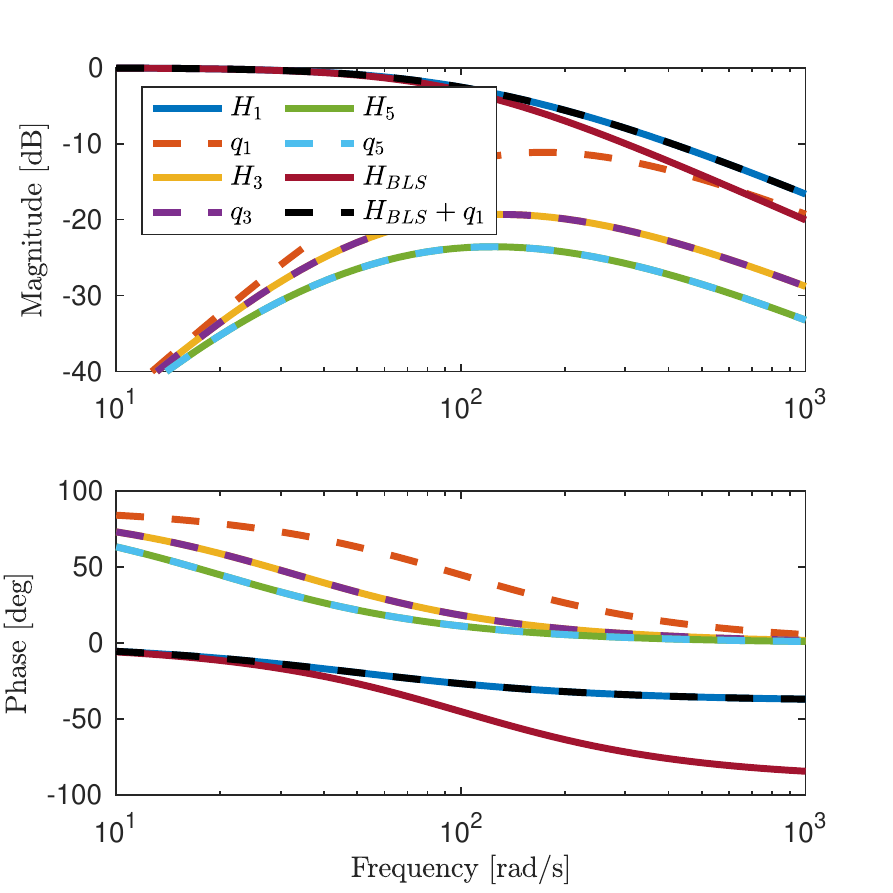}
  \caption{HOSIDF $H_j$ and harmonics of the nonlinearity $q_j$ for the FORE element.}
  \label{fig:hosidfFore}
\end{figure}

\begin{figure}[t]
  \centering
  \includegraphics[width=0.45\textwidth]{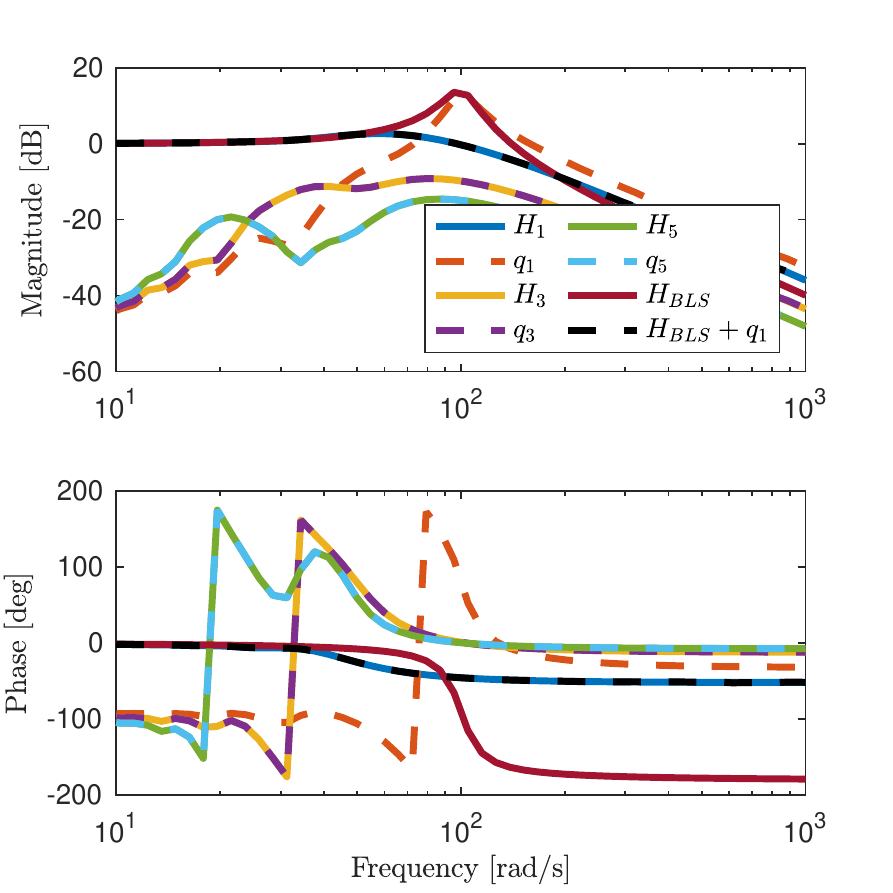}
  \caption{HOSIDF $H_j$ and harmonics of the nonlinearity  for the SORE element.}
  \label{fig:hosidfSore}
\end{figure}

In this section we will illustrate the equivalence of the introduced representation of reset systems with the standard models. To show that the introduced methodology represents the steady-state behaviour of a reset system correctly, we compare the HOSIDF derived in Corollaries \ref{cor:qharmonics} and \ref{cor:HOSIDF} with ones derived using the original method introduced by \cite{Guo2009} and \cite{Saikumar2021}.

Figures \ref{fig:hosidfFore} and \ref{fig:hosidfSore} compare the HOSIDF and the harmonics of the nonlinear component $q$ of FORE and SORE. The differences between the 1st-order describing function, 1st harmonic of the nonlinear contribution and the transfer function of the base-linear system can be clearly seen. The 1st-order HOSIDF is recreated by summing the transfer function of the BLS and the 1st harmonic of $q$. As expected, the higher-order harmonics of $q$ and HOSIDF are equivalent.




\section{Conclusion} \label{sec:Conclusion}
In this paper, we introduced a new representation for the steady-state responses of open-loop reset systems. It provides an intuitive explanation for  HOSIDF of reset systems and highlights the influence of nonlinearity on the behaviour of the reset system. The insights from this work can be applied for the shaping of nonlinearities of reset elements. While this paper considers only reset elements in open-loop, the same approach may be taken in the case of closed-loop systems if appropriate assumptions about the reset instants are made \cite{Saikumar2021}.

\addtolength{\textheight}{-12cm}   








\bibliographystyle{unsrt}
\bibliography{references}

\end{document}